\documentclass[10pt,conference,letterpaper]{IEEEtran}

\newtheorem{definition}{Definition}
\newtheorem{theorem}{Theorem}
\newtheorem{lemma}{Lemma}

\newtheorem{remark}{Remark}

\newtheorem{corollary}{Corollary}

\newcommand{\bydef} {{\buildrel{\triangle}\over =}}
\usepackage{amsmath}
\usepackage{amsfonts}
\usepackage{graphicx}
\usepackage{pstricks}
\usepackage{dsfont}
\begin{document}
\title{On the Capacity of Memoryless Finite-State Multiple Access Channels with Asymmetric Noisy State Information at the Encoders}
\author{\IEEEauthorblockN{Nevroz \c{S}en}
\IEEEauthorblockA{Queen's University\\
nsen@mast.queensu.ca}
\and
\IEEEauthorblockN{Giacomo Como}
\IEEEauthorblockA{M.I.T.\\
giacomo@mit.edu}
\and
\IEEEauthorblockN{Serdar Y\"{u}ksel}
\IEEEauthorblockA{
Queen's University\\
 yuksel@mast.queensu.ca}
\and
\IEEEauthorblockN{Fady Alajaji}
\IEEEauthorblockA{
Queen's University \\
fady@mast.queensu.ca}
}

\maketitle

\footnotetext{This work was supported in part by the Natural Sciences and Engineering Research Council of Canada (NSERC).}

\begin{abstract}
We consider the capacity of memoryless finite-state multiple access channel (FS-MAC) with causal asymmetric noisy state information available at both transmitters and complete state information available at the receiver. Single letter inner and outer bounds are provided for the capacity of such channels when the state process is independent and identically distributed. The outer bound is attained by observing that the proposed inner bound is tight for the sum-rate capacity.
\end{abstract}

\section{Introduction and Literature Review}
Modeling communication channels with a state process fits well for many physical scenarios. For single-user channels, the characterization of the capacity with various degrees of channel state information at the transmitter (CSIT) and at the receiver (CSIR) is well understood. Among them, Shannon \cite{Sha_Side} determined the capacity formula when causal noiseless state information is available at the transmitter, where state is independent identically distributed (i.i.d.). The same problem with non-causal side information is considered in \cite{Gelfand}. In \cite{Salehi}, Shannon's result is extended to the case where noisy state observation is available at both the transmitter and the receiver. Later, in \cite{Caire} this result has been shown to be a special case of Shannon's model and the authors also determined that when CSIT is a deterministic function of CSIR optimal codes can be constructed directly on the input alphabet.

In the multi-user setting, \cite{Das} provides a multi-letter characterization of the capacity region of time-varying multiple access channels (MACs) with various degrees of CSIT and CSIR. In \cite{Jafar}, a general framework for the capacity region of MACs with causal and non-causal CSI is presented. In a related work, MACs where the encoders have degraded information on the channel state, which is coded to the encoders, is considered \cite{Cemal}. In \cite{lap-ste-1}, memoryless FS-MACs with two independent states (see also \cite{lap-ste-2} for the single state case), each known causally and strictly causally to one encoder, are considered and an achievable rate region, which is shown to contain an achievable region where each user applies Shannon strategies, is proposed. In \cite{lap-ste-1} and \cite{lap-ste-2} it is also shown that strictly casual state information does not increase the sum-rate capacity. More recently, in \cite{basher} finite-state Markovian MACs with asymmetric delayed state information at the transmitters are studied and their capacity region is determined.

The most relevant work to our paper is \cite{GiacomoSerdar}, which obtained a single letter characterization of the capacity region for memoryless FS-MAC in which transmitters have asymmetric partial quantized state observations and the receiver has full state information. In this work, the authors were inspired from team decision theory \cite{Yuksel}, \cite{Wits}. We herein mainly adopt the converse technique presented in \cite{GiacomoSerdar} and partially extend it to a noisy setup. The present paper, thus, studies the FS-MAC in which each of the transmitters have an asymmetric state information which is corrupted by an i.i.d. noise process and the receiver has complete state information. We provide a single letter inner bound to the capacity region, in terms of Shannon strategies \cite{Sha_Side}. By observing that this inner bound is tight for the sum-rate capacity, we also provide an outer bound to the channel's capacity region. We modify the approach in \cite{GiacomoSerdar} to account for the fact that the decoder does not have access to the state information at the encoders, and that the past state information does not lead to a tractable recursion.

The rest of the paper is organized as follows. In Section \ref{full} we formally state the problem, present inner and outer bounds to the capacity region with the achievability and converse proofs and in Section \ref{conc} we present concluding remarks.

Throughout the paper we will use the following notations. A random variable will be denoted by an upper case letter $X$ and its particular realization by a lower case letter $x$. For a vector $v$, and a positive integer $i$, $v_{i}$ will denote the $i$-th entry of $v$, while $v_{[i]} = (v_1, \cdots, v_i)$ will denote the vector of the first $i$ entries of $v$. For a finite set $\mathcal{A}$, $\mathcal{P}(\mathcal{A})$ will denote the simplex of probability distributions over $\mathcal{A}$. Probability distributions are denoted by $P(\cdot)$ and subscripted by the name of the random variables and conditioning, e.g., $P_{U,T|V,S}(u,t|v,s)$ is the conditional probability of $(U=u,T=t)$ given $(V=v,S=s)$. Finally, for a positive integer $n$, we shall denote by $\mathcal{A}^{(n)} := \bigcup_{0<s<n} \mathcal{A}^{s}$ the set of $\mathcal{A}$-strings of length smaller than $n$. We denote the indicator function of an event by $1_{\{E\}}$. All sets considered hereafter are finite.

\section{On the Capacity of FS-MAC with Noisy CSIT and Complete CSIR}\label{full}
Consider a two-user memoryless FS-MAC, with two encoders, $a, b$, and two independent message sources $W_a$ and $W_b$ which are uniformly distributed in the sets $W_a \in \{1,2,\cdots,M_a\}$ and $W_b \in \{1,2,\cdots,M_b\}$, respectively. The channel inputs of the encoders are $X^{a}$ and $X^{b}$, respectively. The channel state process is modeled as a sequence $\{S_t\}_{t=1}^{\infty}$ of i.i.d. random variables in some space $\cal S$. The two encoders have access to causal noisy version of the state information $S_t$ at each time $t\geq1$ modeled by $S_{t}^{a} \in {\cal S}^{a}$, $S_{t}^{b} \in {\cal S}^{b}$, respectively and as such the joint distribution of $(S_t,S_t^a,S_t^b)$ satisfies
\begin{eqnarray}
P_{S_{t}^{a}, S_{t}^{b},S_t}(s_{t}^{a}, s_{t}^{b},s_t)=P_{S_{t}^{a}|S_t}(s_{t}^{a}|s_t)P_{S_{t}^{b}|S_t}(s_{t}^{b}|s_t)P_{S_t}(s_t)\label{eq-sta-no}.
\end{eqnarray}
We also assume that $S_t$ is fully available at the receiver (see \ref{fig:macfb}) and that $(S_t,S_t^a,S_t^b)$ are independent of $(W_a,W_b)$ $\forall t\geq 1$. The channel inputs at time $t$, i.e., $X_t^{a}$ and $X_t^{b}$, are functions of the locally available information $(W_a,S_{[t]}^{a})$ and $(W_b,S_{[t]}^{b})$. Let $\mathbf{W}:=(W_a,W_b)$ and $\mathbf{X}:=(X^a,X^b)$. Then, the laws governing $n$-sequences of state, input and output letters are given by
\begin{eqnarray}
&&\hspace{-0.6in}P_{Y_{[n]}|\mathbf{W},\mathbf{X}_{[n]},S_{[n]},S_{[n]}^a,S_{[n]}^b}(y_{[n]}|\mathbf{w},\mathbf{x}_{[n]},s_{[n]},s_{[n]}^a,s_{[n]}^b)\nonumber\\
&&\quad\quad\quad=\prod _{t=1}^nP_{Y_t|X_{t}^{a}, X_{t}^{b}, S_t}(y_t|x_{t}^{a}, x_{t}^{b}, s_t), \label{eq-ch}
\end{eqnarray}
where the channel's transition distribution, $P_{Y_t|X_{t}^{a}, X_{t}^{b}, S_t}(y_t|x_{t}^{a}, x_{t}^{b}, s_t)$, is given a priori.
\begin{definition}\label{def:maccode}
An $(n, 2^{nR_a}, 2^{nR_b})$ code with block length $n$ and rates $(R_a, R_b)$ for an FS-MAC with noisy state feedback consists of
\begin{itemize}
\item [(1)] A sequence of mappings for each encoder
\begin{center}
$\phi_{t}^{(a)}: (\mathcal{S}^{a})^t \times \mathcal{W}_a  \rightarrow {\cal X}_a, \enspace t=1,2,...n$;\\
\end{center}
\begin{center}
$\phi_{t}^{(b)}: (\mathcal{S}^{b})^t \times \mathcal{W}_b \rightarrow {\cal X}_b, \enspace t=1,2,...n$.\\
\end{center}
\item [2)] An associated decoding function
\begin{center}
$\psi: (\mathcal{S})^n\times{\cal Y}^n\rightarrow \mathcal{W}_a \times \mathcal{W}_b$.\\
\end{center}
\end{itemize}
\end{definition}
%
%
\begin{figure}
%
%
\hspace{-0.1in}
\setlength{\unitlength}{3947sp}%
\begingroup\makeatletter\ifx\SetFigFont\undefined%
\gdef\SetFigFont#1#2#3#4#5{%
  \reset@font\fontsize{#1}{#2pt}%
  \fontfamily{#3}\fontseries{#4}\fontshape{#5}%
  \selectfont}%
\fi\endgroup%
\begin{picture}(4299,2635)(1281,-3202)
\thinlines
{\color[rgb]{0,0,0}\put(1346,-1379){\vector( 1, 0){300}}
}%
{\color[rgb]{0,0,0}\put(1658,-1725){\framebox(846,709){}}
}%
{\color[rgb]{0,0,0}\put(1656,-2709){\framebox(846,709){}}
}%
{\color[rgb]{0,0,0}\put(1346,-2327){\vector( 1, 0){300}}
}%
{\color[rgb]{0,0,0}\put(2510,-2328){\line( 1, 0){305}}
\put(2813,-2328){\line( 0, 1){344}}
\put(2813,-1984){\vector( 1, 0){162}}
}%
{\color[rgb]{0,0,0}\put(2991,-2224){\framebox(1036,706){}}
}%
{\color[rgb]{0,0,0}\put(2510,-1325){\line( 1, 0){308}}
\put(2817,-1325){\line( 0,-1){403}}
\put(2817,-1728){\vector( 1, 0){173}}
}%
{\color[rgb]{0,0,0}\put(2075,-579){\vector( 0,-1){431}}
\put(2075,-579){\line( 1, 0){1379}}
\put(3454,-579){\line( 0,-1){931}}
}%
{\color[rgb]{0,0,0}\put(2040,-3190){\vector( 0, 1){479}}
\put(2040,-3190){\line( 1, 0){1415}}
\put(3455,-3190){\line( 0, 1){970}}
}%
{\color[rgb]{0,0,0}\put(4034,-1713){\vector( 1, 0){343}}
}%
{\color[rgb]{0,0,0}\put(4034,-2026){\vector( 1, 0){343}}
}%
{\color[rgb]{0,0,0}\put(4384,-2234){\framebox(846,709){}}
}%
{\color[rgb]{0,0,0}\put(5237,-1696){\vector( 1, 0){300}}
}%
{\color[rgb]{0,0,0}\put(5237,-2079){\vector( 1, 0){300}}
}%
\put(1460,-1327){\makebox(0,0)[b]{\smash{{\SetFigFont{9}{10.8}{\rmdefault}{\mddefault}{\updefault}{\color[rgb]{0,0,0}$W_a$}%
}}}}
\put(2090,-1534){\makebox(0,0)[b]{\smash{{\SetFigFont{9}{10.8}{\rmdefault}{\mddefault}{\updefault}{\color[rgb]{0,0,0}$\phi_{t}^a(W_a,S_t^a)$}%
}}}}
\put(2088,-1292){\makebox(0,0)[b]{\smash{{\SetFigFont{9}{10.8}{\rmdefault}{\mddefault}{\updefault}{\color[rgb]{0,0,0}Encoder}%
}}}}
\put(1460,-2264){\makebox(0,0)[b]{\smash{{\SetFigFont{9}{10.8}{\rmdefault}{\mddefault}{\updefault}{\color[rgb]{0,0,0}$W_b$}%
}}}}
\put(2088,-2306){\makebox(0,0)[b]{\smash{{\SetFigFont{9}{10.8}{\rmdefault}{\mddefault}{\updefault}{\color[rgb]{0,0,0}Encoder}%
}}}}
\put(2090,-2521){\makebox(0,0)[b]{\smash{{\SetFigFont{9}{10.8}{\rmdefault}{\mddefault}{\updefault}{\color[rgb]{0,0,0}$\phi_{t}^b(W_b,S_t^b)$}%
}}}}
\put(3510,-2043){\makebox(0,0)[b]{\smash{{\SetFigFont{9}{10.8}{\rmdefault}{\mddefault}{\updefault}{\color[rgb]{0,0,0}$P(Y_t|\mathbf{X}_t,S_t)$}%
}}}}
\put(3502,-1796){\makebox(0,0)[b]{\smash{{\SetFigFont{9}{10.8}{\rmdefault}{\mddefault}{\updefault}{\color[rgb]{0,0,0}Channel}%
}}}}
\put(2682,-1250){\makebox(0,0)[b]{\smash{{\SetFigFont{8}{9.6}{\rmdefault}{\mddefault}{\updefault}{\color[rgb]{0,0,0}$X_t^a$}%
}}}}
\put(2682,-2476){\makebox(0,0)[b]{\smash{{\SetFigFont{9}{10.8}{\rmdefault}{\mddefault}{\updefault}{\color[rgb]{0,0,0}$X_t^b$}%
}}}}
\put(2164,-2969){\makebox(0,0)[b]{\smash{{\SetFigFont{9}{10.8}{\rmdefault}{\mddefault}{\updefault}{\color[rgb]{0,0,0}$S_{[t]}^b$}%
}}}}
\put(2222,-875){\makebox(0,0)[b]{\smash{{\SetFigFont{9}{10.8}{\rmdefault}{\mddefault}{\updefault}{\color[rgb]{0,0,0}$S_{[t]}^a$}%
}}}}
\put(4210,-1674){\makebox(0,0)[b]{\smash{{\SetFigFont{9}{10.8}{\rmdefault}{\mddefault}{\updefault}{\color[rgb]{0,0,0}$Y_t$}%
}}}}
\put(4210,-1945){\makebox(0,0)[b]{\smash{{\SetFigFont{9}{10.8}{\rmdefault}{\mddefault}{\updefault}{\color[rgb]{0,0,0}$S_{[t]}$}%
}}}}
\put(4764,-1797){\makebox(0,0)[b]{\smash{{\SetFigFont{9}{10.8}{\rmdefault}{\mddefault}{\updefault}{\color[rgb]{0,0,0}Decoder}%
}}}}
\put(4801,-2063){\makebox(0,0)[b]{\smash{{\SetFigFont{9}{10.8}{\rmdefault}{\mddefault}{\updefault}{\color[rgb]{0,0,0}$\psi(Y_{[n]},S_{[n]})$}%
}}}}
\put(5363,-1662){\makebox(0,0)[b]{\smash{{\SetFigFont{9}{10.8}{\rmdefault}{\mddefault}{\updefault}{\color[rgb]{0,0,0}$\hat{W}_a$}%
}}}}
\put(5363,-2054){\makebox(0,0)[b]{\smash{{\SetFigFont{9}{10.8}{\rmdefault}{\mddefault}{\updefault}{\color[rgb]{0,0,0}$\hat{W}_b$}%
}}}}
\end{picture}%
\caption{The multiple-access channel with noisy state feedback.}
\label{fig:macfb}
\end{figure}
The system's probability of error, $P_{e}^{(n)}$, is given by
\begin{eqnarray}
\frac{1}{2^{n(R_a+R_b)}}\sum_{w_a=1}^{2^{nR_a}}\sum_{w_b=1}^{2^{nR_b}}P\left(\psi(Y_{[n]},S_{[n]})\neq (w_a,w_b)| \mathbf{W}=\mathbf{w}\right).\nonumber
\end{eqnarray}
A rate pair $(R_a, R_b)$ is achievable if for any $\epsilon > 0$ there exists, for all $n$ sufficiently large, an $(n, 2^{nR_a}, 2^{nR_b})$ code such that $\frac{1}{N}\log M_a \geq R_a \geq 0$, $\frac{1}{N}\log M_b \geq R_b \geq 0$ and $P_{e}^{(n)} \leq \epsilon$. The capacity region of the FS-MAC, ${\cal C}_{FS}$, is the closure of the set of all achievable rate pairs $(R_a, R_b)$ and the sum-rate capacity is defined as ${\cal C}^{FS}_{\sum}:=\max_{(R_a,R_b) \in {\cal C}_{FS}}(R_a+R_b)$.

Before proceeding with the main result, we introduce \textit{memoryless stationary team policies} \cite{GiacomoSerdar} and their associated rate regions. We first define Shannon strategies.
\begin{definition}
Let the set of all possible functions from ${\cal S}^{a}$ to ${\cal X}^{a}$ and ${\cal S}^{b}$ to ${\cal X}^{b}$  be denoted by ${\cal T}^{a}$ and ${\cal T}^{b}$, respectively, where ${\cal T}^{a}={{\cal X}^{a}}^{|{\cal S}^a|}$ and ${\cal T}^{b}={{\cal X}^{b}}^{|{\cal S}^b|}$. Let $T^{a} \in {\cal T}^{a}$ and $T^b \in {\cal T}^{b}$ be two ${\cal T}^a$-valued and ${\cal T}^b$-valued random vectors, respectively, referred to as Shannon strategies.
\end{definition}
\begin{definition}\cite{GiacomoSerdar}\label{def:teampol}
A memoryless stationary (in time) team policy is a family
\begin{eqnarray}
\Pi=\left\{\pi=\left(\pi_{T^a}(\cdot),\pi_{T^b}(\cdot)\right)\in {\cal P}({\cal T}^a)\times{\cal P}({\cal T}^b)\right\}\label{eq-team-pol}
\end{eqnarray}
of probability distributions on the two sets of random functions. For every memoryless stationary team policy $\pi$, $\mathcal{R}(\pi)$ denotes the region of all rate pairs $R=(R_a,R_b)$ satisfying
\begin{eqnarray}
R_a &<& I(T^{a};Y|T^{b},S) \quad \label{eq-ra1-f}\\
R_b &<& I(T^{b};Y|T^{a},S) \quad  \label{eq-ra2-f}\\
R_a+R_b &<& I(T^a,T^{b};Y|S) \label{eq-ra3-f}
\end{eqnarray}
where $S$, $T^a$, $T^b$ and $Y$ are random variables taking values in ${\cal S}$, ${\cal T}^{a}$, ${\cal T}^{b}$ and $\cal Y$, respectively and whose joint probability distribution factorizes as
\begin{eqnarray}
&&\hspace{-0.5in}P_{S,T^a,T^b,Y}(s,t^a,t^b,y) \nonumber\\
&&=P_S(s)P_{Y|T^a,T^b,S}(y|t^a,t^b,s)\pi_{T^a}(t^a)\pi_{T^b}(t^b)\label{eq-joidist-f}.
\end{eqnarray}
\end{definition}
We can now state the inner bound to the capacity region. Let ${\cal C}_{IN}:=\overline{co}\bigg(\bigcup_{\pi}\mathcal{R}(\pi)\bigg)$ denotes the closure of the convex hull of the rate regions $\mathcal{R}(\pi)$ given by (\ref{eq-ra1-f})-(\ref{eq-ra3-f}) associated to all possible memoryless stationary team polices as defined in (\ref{eq-team-pol}).
\begin{theorem}[Inner Bound]\label{the-main-achi-f}
 ${\cal C}_{IN}\subseteq {\cal C}_{FS}$.
\end{theorem}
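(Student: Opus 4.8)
The plan is to reduce the problem to the classical random-coding achievability theorem for a state-free memoryless multiple access channel, by absorbing the encoders' causal, time-invariant use of the noisy state into the codebook via Shannon strategies, and then to close up by a time-sharing (convexification) argument. Fix a memoryless stationary team policy $\pi=(\pi_{T^a},\pi_{T^b})$ and a rate pair $(R_a,R_b)$ strictly inside $\mathcal{R}(\pi)$. First I would introduce an auxiliary discrete memoryless MAC whose input alphabets are $\mathcal{T}^a$ and $\mathcal{T}^b$, whose output alphabet is $\mathcal{Y}\times\mathcal{S}$, and whose transition law is
\[
\widetilde{P}(y,s\mid t^a,t^b):=P_S(s)\sum_{s^a,s^b}P_{S^a|S}(s^a|s)\,P_{S^b|S}(s^b|s)\,P_{Y|X^a,X^b,S}\!\big(y\mid t^a(s^a),t^b(s^b),s\big).
\]
Since $\{S_t\}$ is i.i.d., $(S_t^a,S_t^b)$ is conditionally i.i.d.\ given $S_t$ by (\ref{eq-sta-no}), and $P_{Y_t|X^a_t,X^b_t,S_t}$ is time-invariant, using $\widetilde{P}$ independently across the $n$ channel uses is consistent with (\ref{eq-ch}); hence the auxiliary channel is memoryless, and its single-letter joint law under the input distribution $\pi_{T^a}\!\times\!\pi_{T^b}$ is exactly (\ref{eq-joidist-f}).

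Next I would invoke the standard Ahlswede--Liao achievability argument for this auxiliary MAC: draw $2^{nR_a}$ codewords $T^a_{[n]}(w_a)$ i.i.d.\ $\sim\prod_{t=1}^n\pi_{T^a}(\cdot)$ and, independently, $2^{nR_b}$ codewords $T^b_{[n]}(w_b)$ i.i.d.\ $\sim\prod_{t=1}^n\pi_{T^b}(\cdot)$, and decode $(\hat w_a,\hat w_b)$ by joint typicality of $(T^a_{[n]}(\hat w_a),T^b_{[n]}(\hat w_b),Y_{[n]},S_{[n]})$ with respect to the $n$-fold product of (\ref{eq-joidist-f}). The usual union bound over the three error events then makes $P_e^{(n)}\to0$ provided $R_a<I(T^a;Y,S\mid T^b)$, $R_b<I(T^b;Y,S\mid T^a)$ and $R_a+R_b<I(T^a,T^b;Y,S)$; and since (\ref{eq-joidist-f}) makes $T^a,T^b,S$ mutually independent, $I(T^a;S\mid T^b)=I(T^a,T^b;S)=0$, so these three conditions collapse precisely to (\ref{eq-ra1-f})--(\ref{eq-ra3-f}). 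It then remains to transcribe a good auxiliary code into an $(n,2^{nR_a},2^{nR_b})$ code in the sense of Definition \ref{def:maccode}: encoder $a$ holding $w_a$ sends, at time $t$, $X^a_t=\big(T^a_t(w_a)\big)(S^a_t)\in\mathcal{X}^a$, i.e.\ it evaluates the $t$-th coordinate (a function $\mathcal{S}^a\to\mathcal{X}^a$) of its codeword at the currently observed noisy state; encoder $b$ does the same; and $\psi$ is the joint-typicality decoder above, which reads only $(Y_{[n]},S_{[n]})$ and is therefore admissible even though the decoder lacks $S^a_{[n]},S^b_{[n]}$. By construction the induced law of $(W_a,W_b,S_{[n]},Y_{[n]})$ equals that of the auxiliary channel with the same codebooks, so the error probabilities coincide; hence every rate pair in the interior of $\mathcal{R}(\pi)$ is achievable, and as $\mathcal{C}_{FS}$ is closed by definition we get $\mathcal{R}(\pi)\subseteq\mathcal{C}_{FS}$. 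Because the set of achievable pairs is in addition convex (time-sharing between two such codes), it contains the closed convex hull of $\bigcup_\pi\mathcal{R}(\pi)$, i.e.\ $\mathcal{C}_{IN}\subseteq\mathcal{C}_{FS}$.

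I expect the only genuinely delicate point to be this reduction step: checking that a causal, time-invariant encoder restricted to functions of the current noisy symbol is without loss for the inner bound (immediate, since achievability only requires exhibiting one code family), and confirming that after averaging out $S^a,S^b$ the effective channel from $(T^a,T^b)$ to $(Y,S)$ is a bona fide memoryless MAC whose output is exactly what the receiver observes. Once that is in place, the rest — codebook generation, joint-typicality (packing-lemma) error analysis, expurgation of a bad codebook to pass from average to maximal error, and the time-sharing convexification — is the textbook memoryless-MAC argument, which I would only sketch rather than reproduce in full.
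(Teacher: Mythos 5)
Your proof is correct and follows essentially the same route as the paper: both replace the state-dependent MAC by a state-free one whose inputs are the Shannon strategies $T^a,T^b$ and whose output is the pair $(Y,S)$, generate i.i.d.\ strategy codewords under $\pi_{T^a}\times\pi_{T^b}$, decode by joint typicality of $(T^a_{[n]},T^b_{[n]},Y_{[n]},S_{[n]})$, and finish with time-sharing. The only difference is presentational — you phrase it as an explicit reduction to an auxiliary memoryless MAC and cite the classical MAC achievability theorem, while the paper redoes the three-term union-bound error analysis directly; the underlying argument is the same.
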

The achievability proof follows the standard arguments of joint $\epsilon$-typical $n$-sequences \cite[Section 15.2]{cover}.
\begin{definition}\cite[Section 15.2]{cover}\label{def:jnt-typ}
The set ${\cal A}_{\epsilon}^{n}$ of $\epsilon$-typical $n$-sequences $\{(x^1_{[n]},\cdots,x^k_{[n]})\}$ with respect to the distribution $P_{X^1,\cdots,X^k}(x^1, \cdots, x^k)$ is defined by
\begin{eqnarray}
&&\hspace{-0.3in}{\cal A}_{\epsilon}^{n}=\left\{(x^1_{[n]},\cdots,x^k_{[n]})\in {\cal X}^1\times \cdots {\cal X}^k:\right.\nonumber\\
&&\hspace{-0.1in}\left.|-\frac{1}{n}\log\left(P(\textbf{s})\right)-H(S)|<\epsilon, \forall S \subseteq \{X^1,\cdots,X^k\}\right\} \nonumber
\end{eqnarray}
where $\textbf{s}$ denotes ordered set of sequences in $x^1_{[n]},\cdots,x^k_{[n]}$ corresponding to $S$.
\end{definition}
\begin{proof}[Proof of Theorem \ref{the-main-achi-f}]
Fix $(R_a,R_b)\in \mathcal{R}(\pi)$.

\textbf{\textit{Codebook Generation}}
Fix $\pi_{T^a}(t^a)$ and $\pi_{T^b}(t^b)$. For each $w_a \in \{1,\cdots, 2^{nR_a}\}$ randomly generate its corresponding $n$-tuple $t_{[n],w_a}^{a}$, each according to $\prod_{i=1}^n\pi_{T_{i}^{a}}(t_{i,w_a}^{a})$. Similarly, For each $w_b \in \{1,\cdots, 2^{nR_b}\}$ randomly generate its corresponding $n$-tuple $t_{[n],w_b}^{b}$, each according to $\prod_{i=1}^n\pi_{T_{i}^{b}}(t_{i,w_b}^{b})$. These codeword pairs form the codebook, which is revealed to the decoder while codewords $t_{i,w_l}^{l}$ is revealed to encoder $l$, $l=\{a,b\}$.

\textbf{\textit{Encoding}}
Define the encoding functions as follows: $x_{i}^{a}(w_a)=\phi_{i}^{a}(w_a,s_{[i]}^a)=t_{i,w_a}^a(s_i^a)$ and $x_{i}^{b}(w_b)=\phi_{i}^{b}(w_b,s_{[i]}^b)=t_{i,w_b}^b(s_i^b)$ where $t_{i,w_a}^a$ and $t_{i,w_b}^b$ denote the $i$th component of $t_{[n],w_a}^{a}$ and $t_{[n],w_b}^{b}$, respectively, and $s_i^a$ and $s_i^b$ denote the last component $s_{[i]}^a$ and $s_{[i]}^b$, respectively, $i=1,\cdots,n$.
Therefore, to send the messages $w_a$ and $w_b$, we simply transmit the corresponding $t_{[n],w_a}^{a}$ and $t_{[n],w_b}^{b}$, respectively.

\textbf{\textit{Decoding}}
After receiving $(y_{[n]},s_{[n]})$, the decoder looks for the only $(w_a,w_b)$ pair such that $(t_{[n],w_a}^{a}, t_{[n],w_b}^{b}$ $,y_{[n]},s_{[n]})$ are jointly $\epsilon-$typical and declares this pair as its estimate $(\hat{w}_a,\hat{w}_b)$.

\textbf{\textit{Error Analysis}}
Without loss of generality, we can assume that $(w_a,w_b)=(1,1)$ was sent. An error occurs, if the correct codewords are not typical with the received sequence or there is a pair of incorrect codewords that are typical with the received sequence. Define the events $E_{\alpha,\beta}\bydef\big\{(T_{[n],\alpha}^a,T_{[n],\beta}^b,Y_{[n]},S_{[n]})\in A_{\epsilon}^n\big\}$, $\alpha\in\{1,\cdots,2^{nR_a}\}$ and $\beta\in\{1,\cdots,2^{nR_b}\}$.
Then by the union bound we get
\begin{eqnarray}
&&\hspace{-0.3in}P_{e}^{n}=P\big(E_{1,1}^c\bigcup_{(\alpha,\beta)\neq(1,1)}E_{\alpha,\beta}\big)\leq P(E_{1,1}^c)\nonumber\\
&&\hspace{-0.3in}+\sum_{\alpha=1,\beta\neq1}P(E_{\alpha,\beta}) + \sum_{\alpha\neq1,\beta=1}P(E_{\alpha,\beta})+ \sum_{\alpha\neq1,\beta\neq1}P(E_{\alpha,\beta})\nonumber\\\label{eq-erbound}
\end{eqnarray}
where $P(E_{1,1}^c)$ denotes the probability that no message pair is jointly typical. It can easily be verified that $\{Y_i,S_i,T_i^a,T_i^b\}_{i=1}^{\infty}$ is an i.i.d. sequence and by \cite[Theorem 15.1.2]{cover}, $ P(E_{1,1}^c)\rightarrow0$ for sufficiently large $n$.
Next, let us consider the second term
\begin{eqnarray}
&&\hspace{-0.5in}\sum_{\alpha=1,\beta\neq1}P(E_{\alpha=1,\beta\neq1})\nonumber\\
&&\hspace{-0.2in}=\sum_{\alpha=1,\beta\neq1}P((T_{[n],1}^a,T_{[n],\beta}^b,Y_{[n]},S_{[n]})\in A_{\epsilon}^n)\nonumber\\
&&\hspace{-0.2in}\overset{(i)}{=}\sum_{\alpha=1,\beta\neq1}\sum_{(t_{[n]}^a,t_{[n]}^b,y_{[n]},s_{[n]})\in A_{\epsilon}^n}P_{T_{[n]}^b}(t_{[n]}^b)\nonumber\\
&&\quad \quad \quad P_{T_{[n]}^a, Y_{[n]},S_{[n]}}(t_{[n]}^a, y_{[n]},s_{[n]})\nonumber\\
&&\hspace{-0.2in}\overset{}{\leq}\sum_{\alpha=1,\beta\neq1}|A_{\epsilon}^n|2^{-n[H(T^b)-\epsilon]}2^{-n[H(T^a,Y,S)-\epsilon]}\nonumber\\
&&\hspace{-0.2in}\leq2^{nR_b}2^{-n[H(T^b)+ H(T^a,Y,S) - H(T^a,T^b,Y,S)-3\epsilon]}\nonumber\\
&&\hspace{-0.2in}\overset{(ii)}{=}2^{n[R_b-I(T^b;Y|S,T^a)-3\epsilon]}\label{eq-erbo1}
\end{eqnarray}
where $(i)$ holds since for $\beta\neq 1$, $T_{[n],\beta}^b$ is independent of $(T_{[n],1}^a,Y_{[n]},S_{[n]})$ and $(ii)$ follows since $T^b$ and $(T^a,S)$ are independent and $I(T^b;Y,T^a,S)=I(T^b;T^a,S)+I(T^b;Y|T^a,S)$
$=I(T^b;Y|T^a,S)$, where $I(T^b;T^a,S)=0$. Following the same steps for $(\alpha\neq1,\beta=1)$ and $(\alpha\neq1,\beta\neq1)$ we get
\begin{eqnarray}
&&\sum_{\alpha\neq1,\beta=1}P(E_{\alpha,\beta})\leq2^{n[R_a-I(T^a;Y|T^b,S)-3\epsilon]}, \nonumber\\ &&\sum_{\alpha\neq1,\beta\neq1}P(E_{\alpha,\beta})\leq2^{n[R_a+R_b-I(T^a,T^b;Y|S)-3\epsilon]} \label{eq-erbo2},
\end{eqnarray}
and the rate conditions of the $\mathcal{R}(\pi)$ imply that each term tends in (\ref{eq-erbound}) tends to zero as $n \rightarrow \infty$. This shows the achievability of a rate pair $(R_a, R_b) \in \mathcal{R}(\pi)$. Achievability of any rate pair in ${\cal C}_{IN}$ follows from a standard time-sharing argument.
\end{proof}
We now present an outer bound to ${\cal C}_{FS}$, which is obtained by providing a tight converse to the sum-rate capacity. Let \begin{eqnarray}
&&\hspace{-0.3in}{\cal C}_{OUT}:=\biggl\{(R_a,R_b)\in \mathds{R}^{+}\times\mathds{R}^{+}:\nonumber\\
&&\quad \quad \quad \quad \quad R_a+R_b\leq\sup_{\pi_{T^a}(t^a)\pi_{T^b}(t^b)}I(T^a,T^b;Y|S)\biggr\}\nonumber.
\end{eqnarray}
\begin{theorem}[Outer Bound]\label{the-main-outer-f}
${\cal C}_{FS}\subseteq {\cal C}_{OUT}$.
\end{theorem}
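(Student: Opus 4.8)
The plan is a single-letterized weak converse. Fix an achievable pair $(R_a,R_b)$ together with codes of vanishing error probability, and for $t=1,\dots,n$ let $T_t^a\in{\cal T}^a$ be the (deterministic) function of $(W_a,S_{[t-1]}^a)$ obtained by fixing those arguments in $\phi_t^{(a)}$, so that $X_t^a=T_t^a(S_t^a)$, and define $T_t^b\in{\cal T}^b$ analogously. Since $\psi(Y_{[n]},S_{[n]})$ estimates $\mathbf{W}$ and $\mathbf{W}$ is independent of $S_{[n]}$, Fano's inequality gives $n(R_a+R_b)\le I(\mathbf{W};Y_{[n]}|S_{[n]})+n\epsilon_n$ with $\epsilon_n\to0$, so it suffices to bound $I(\mathbf{W};Y_{[n]}|S_{[n]})$. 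The one nonstandard ingredient, forced by the asymmetry of the problem, is that after single-letterizing I condition each term on the \emph{true} past state $S_{[t-1]}$ --- not on the noisy observations $S_{[t-1]}^a,S_{[t-1]}^b$ --- which is precisely the information that decouples the two Shannon strategies into a product law of the form \eqref{eq-joidist-f}.

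First I would write $I(\mathbf{W};Y_{[n]}|S_{[n]})=H(Y_{[n]}|S_{[n]})-H(Y_{[n]}|\mathbf{W},S_{[n]})$. For the first term, the chain rule and ``conditioning reduces entropy'' give $H(Y_{[n]}|S_{[n]})\le\sum_{t}H(Y_t|S_{[t]})$. For the second term, the chain rule followed by conditioning on the additional variables $(T_t^a,T_t^b)$ gives $H(Y_{[n]}|\mathbf{W},S_{[n]})\ge\sum_{t}H(Y_t|\mathbf{W},Y_{[t-1]},S_{[n]},T_t^a,T_t^b)$; here I would invoke memorylessness of the channel \eqref{eq-ch}, the structure \eqref{eq-sta-no} (so that $(S_t^a,S_t^b)$ are conditionally independent given $S_t$ and independent of all quantities at times $\tau\neq t$ given $S_t$), and the independence of $(W_a,W_b)$ from the states, to conclude that the conditional law of $Y_t$ given $(\mathbf{W},Y_{[t-1]},S_{[n]},T_t^a,T_t^b)$ reduces to $P_{Y|T^a,T^b,S}(\cdot\,|\,T_t^a,T_t^b,S_t)$, so each such term equals $H(Y_t|T_t^a,T_t^b,S_{[t]})$. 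Combining the two bounds, $I(\mathbf{W};Y_{[n]}|S_{[n]})\le\sum_{t=1}^{n}I(T_t^a,T_t^b;Y_t|S_{[t]})$.

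The decisive step is then to show $I(T_t^a,T_t^b;Y_t|S_{[t]})\le\sup_{\pi_{T^a}(t^a)\pi_{T^b}(t^b)}I(T^a,T^b;Y|S)$ for every $t$. I would expand $I(T_t^a,T_t^b;Y_t|S_{[t]})=\sum_{s_{[t-1]}}P_{S_{[t-1]}}(s_{[t-1]})\,I(T_t^a,T_t^b;Y_t\,|\,S_t,S_{[t-1]}=s_{[t-1]})$ and argue that, conditionally on $S_{[t-1]}=s_{[t-1]}$: by \eqref{eq-sta-no} the strings $S_{[t-1]}^a$ and $S_{[t-1]}^b$ are independent, and since $W_a\perp W_b$ and $(W_a,W_b)$ is independent of the states, the function $T_t^a$ of $(W_a,S_{[t-1]}^a)$ and the function $T_t^b$ of $(W_b,S_{[t-1]}^b)$ become independent with some marginals $\pi_{T^a},\pi_{T^b}$; moreover $S_t\sim P_S$ is independent of $(T_t^a,T_t^b)$, and $Y_t\,|\,(T_t^a,T_t^b,S_t)\sim P_{Y|T^a,T^b,S}$ as in \eqref{eq-joidist-f}. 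Thus for each $s_{[t-1]}$ the joint law of $(S_t,T_t^a,T_t^b,Y_t)$ is exactly of the form \eqref{eq-joidist-f} for the memoryless stationary team policy $\pi=(\pi_{T^a},\pi_{T^b})$, whence $I(T_t^a,T_t^b;Y_t\,|\,S_t,S_{[t-1]}=s_{[t-1]})\le\sup_{\pi_{T^a}(t^a)\pi_{T^b}(t^b)}I(T^a,T^b;Y|S)$. Averaging over $s_{[t-1]}$ and $t$ and then letting $n\to\infty$ in the Fano bound yields $R_a+R_b\le\sup_{\pi_{T^a}(t^a)\pi_{T^b}(t^b)}I(T^a,T^b;Y|S)$, i.e.\ $(R_a,R_b)\in{\cal C}_{OUT}$.

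I expect the main obstacle to be precisely the choice made in the last step, which is why the converse of \cite{GiacomoSerdar} does not transfer verbatim: the decoder's own side information $S_{[n]}$ does \emph{not} decorrelate $T_t^a$ and $T_t^b$ (their noisy observations are coupled through the common state), while conditioning on $(S_{[t-1]}^a,S_{[t-1]}^b)$ would make the strategies deterministic and sever the link to the single-letter region \eqref{eq-ra1-f}--\eqref{eq-joidist-f}; conditioning on the true history $S_{[t-1]}$ threads this needle. The accompanying technical point that needs care is verifying that this extra conditioning does not disturb the Markov structure used in the second paragraph, so that $Y_t$ still depends on the conditioning only through $(T_t^a,T_t^b,S_t)$ and the induced channel $P_{Y|T^a,T^b,S}$ is reproduced intact.
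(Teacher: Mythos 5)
Your proposal is correct and is essentially the paper's proof: you single-letterize via Fano to $\sum_t I(T_t^a,T_t^b;Y_t|S_{[t]})$, then expand as a $P_{S_{[t-1]}}$-weighted average of $I(T_t^a,T_t^b;Y_t|S_t,S_{[t-1]}=\sigma)$, and the observation that conditioning on the true past state $S_{[t-1]}=\sigma$ makes $T_t^a\perp T_t^b$ with the induced channel $P_{Y|T^a,T^b,S}$ is precisely the paper's Lemma~\ref{lem:fact-f}. The only cosmetic differences are that you write $I(\mathbf{W};Y_{[n]}|S_{[n]})$ instead of $I(\mathbf{W};Y_{[n]},S_{[n]})$ (equivalent since $\mathbf{W}\perp S_{[n]}$) and condition on all of $S_{[n]}$ rather than $S_{[t]}$ in the intermediate entropy bound (also equivalent by the same Markov structure).
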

\begin{proof}
We need to show that all achievable rates satisfy
\begin{eqnarray}
R_a+R_b \leq \sup_{\pi_{T^a}(t^a)\pi_{T^b}(t^b)}I(T^a,T^b;Y|S),\nonumber
\end{eqnarray}
i.e., a converse for the sum-rate capacity. We use the converse technique of \cite{GiacomoSerdar} and extend it to a noisy setup. Therefore, following \cite{GiacomoSerdar} let $\alpha_{\mathbf{\sigma}}:= \frac{1}{n}P_{S_{[t-1]}}(\mathbf{\sigma})$, $\eta(\epsilon):=\frac{\epsilon}{1-\epsilon}\log|{\cal Y}|+\frac{H(\epsilon)}{1-\epsilon}$. Observe that $\lim_{\epsilon\rightarrow 0}\eta(\epsilon)=0$ and
\begin{eqnarray}
\sum_{\mathbf{\sigma} \in {{\cal S}}^{(n)}}\alpha_{\mathbf{\sigma}}=\frac{1}{n}\sum_{1\leq t \leq n}\enspace \sum_{\mathbf{\sigma} \in {{\cal S}}^{(t-1)}}P_{S_{[t-1]}}(\mathbf{\sigma})=1\nonumber,
\end{eqnarray}
where ${{\cal S}}^{(n)}$ and ${{\cal S}}^{(t-1)}$ are the sets of all ${\cal S}$-strings of length $n$ and $(t-1)$, respectively. First recall that, since $X_t^a=\phi_{t}^{(a)} \left(W_a,S_{[t-1]}^a,S_t^a\right)$ and $X_t^b=\phi_{t}^{(b)} \left(W_b,S_{[t-1]}^b, S_t^b\right)$, we have
\begin{eqnarray}
&&T_t^a=\phi_{t}^{(a)}\left(W_a,S_{[t-1]}^a\right) \in {{\cal X}^{a}}^{|{\cal S}^a|},\nonumber\\ &&T_t^b=\phi_{t}^{(b)}\left(W_b,S_{[t-1]}^b\right) \in {{\cal X}^{b}}^{|{\cal S}^b|}. \label{eq-t2}
\end{eqnarray}
We now show that the sum of any achievable rate pair can be written as the convex combination of conditional mutual information terms which are indexed by the realization of past complete state information.
\begin{lemma}\label{lem:conv-f}
Let $T_t^{a} \in {\cal T}^{a}$ and $T_t^b \in {\cal T}^{b}$ be the Shannon strategies induced by $\phi_{t}^a$ and $\phi_{t}^b$, respectively, as shown in (\ref{eq-t2}). Assume that a rate pair $R=(R_a,R_b)$, with block length $n\geq1$ and a constant $\epsilon \in (0,1/2)$, is achievable. Then,
\begin{eqnarray}
R_a+R_b\leq \sum_{\mathbf{\sigma} \in {\cal S}^{(n)}}\alpha_{\mathbf{\sigma}} I(T_t^a,T_t^b;Y_t|S_t,S_{[t-1]}=\mathbf{\sigma})+\eta(\epsilon) \label{eq-lf1}.
\end{eqnarray}
\end{lemma}
\begin{proof}
Let $\mathbf{T}_t:=(T_t^a,T_t^b)$. By Fano's inequality, we get
\begin{eqnarray}
H(\mathbf{W}|Y_{[n]},S_{[n]})\leq H(\epsilon) + \epsilon\log(|{\cal W}_a||{\cal W}_b|). \label{eq-c1}
\end{eqnarray}
Observing that
\begin{eqnarray}
\hspace{-0.2in}I(\mathbf{W};Y_{[n]}, S_{[n]})&=&H(\mathbf{W})-H(\mathbf{W}|Y_{[n]},S_{[n]})\nonumber\\
\hspace{-0.2in}&=&\log(|{\cal W}_a||{\cal W}_b|)-H(\mathbf{W}|Y_{[n]},S_{[n]}). \label{eq-c2}
\end{eqnarray}
Combining (\ref{eq-c1}) and (\ref{eq-c2}) gives
\begin{eqnarray}
(1-\epsilon)\log(|{\cal W}_a||{\cal W}_b|)\leq I(\mathbf{W};Y_{[n]}, S_{[n]})+H(\epsilon)\nonumber
\end{eqnarray}
and
\begin{eqnarray}
R_a+R_b&=& \frac{1}{n}\log(|{\cal W}_a||{\cal W}_b|)\nonumber\\
&\leq& \frac{1}{1-\epsilon}\frac{1}{n}\left(I(\mathbf{W};Y_{[n]}, S_{[n]})+H(\epsilon)\right)\label{eq-c3}.
\end{eqnarray}
Furthermore, $I(\mathbf{W};Y_{[n]}, S_{[n]})$ can be written as
\begin{eqnarray}
&&\hspace{-0.3in}\sum_{t=1}^{n}\left[H(Y_t,S_t|S_{[t-1]},Y_{[t-1]})-H(Y_t,S_t|\mathbf{W},S_{[t-1]},Y_{[t-1]})\right]\nonumber\\
&&\hspace{-0.3in}\overset{(i)}{=}\sum_{t=1}^{n}\left[H(Y_t|S_{[t]},Y_{[t-1]})-H(Y_t|\mathbf{W},S_{[t]},Y_{[t-1]})\right]\nonumber\\
&&\hspace{-0.3in}\overset{(ii)}{\leq}\sum_{t=1}^{n}\left[H(Y_t|S_{[t]})-H(Y_t|\mathbf{W},S_{[t]},Y_{[t-1]},\mathbf{T}_t)\right]\nonumber\\
&&\hspace{-0.3in}\overset{(iii)}{=}\sum_{t=1}^{n}\left[H(Y_t|S_{[t]})-H(Y_t|S_{[t]},\mathbf{T}_t)\right]\nonumber\\
&&\hspace{-0.262in}=\sum_{t=1}^{n}I(\mathbf{T}_t;Y_t|S_{[t]})\label{eq-c4}
\end{eqnarray}
where $(i)$ follows from the fact that $S_{t}$ is i.i.d. and independent of $\mathbf{W}$, in $(ii)$, $\mathbf{T}_t:=(T_t^a,T_t^b)$ are Shannon strategies whose realizations are mappings $t_t^i:S_t^{i}\rightarrow X_t^i$ for $i=\{a,b\}$ and thus $(ii)$ holds since conditioning reduces entropy. Finally, $(iii)$ follow since
\begin{eqnarray}
&&\hspace{-0.2in}P_{Y_t|\mathbf{W},S_t,S_{[t-1]},Y_{[t-1]},T_t^a,T_t^b}(y_t|\mathbf{w},s_t,s_{[t-1]},y_{[t-1]},t_t^a,t_t^b)\nonumber\\
&&=\sum_{s_t^a,s_t^b}P_{Y_t|S_t,S_t^a,S_t^b,T_t^a,T_t^b}(y_t|s_t,s_t^a,s_t^b,t_t^a,t_t^b)\nonumber\\ &&\hspace{0.9in}\times P_{S_t^a,S_t^b|S_t}(s_t^a,s_t^b|s_t)\nonumber\\
&&=P_{Y_t|S_t,T_t^a,T_t^b}(y_t|s_t,t_t^a,t_t^b)\label{eq-prob}
\end{eqnarray}
where the first equality is verified by (\ref{eq-ch}), where $x_t^i=t_t^i(s_t^i)$ for $i=\{a,b\}$, and by $\{S_t\}$ being i.i.d. and independent of $\mathbf{W}$. Now, let $\chi(\epsilon):=\frac{H(\epsilon)}{n(1-\epsilon)}$ and combining (\ref{eq-c3})-(\ref{eq-c4}) gives
\begin{eqnarray}
&&\hspace{-0.3in}R_a+R_b=\frac{1}{n}\log(|{\cal W}_a||{\cal W}_b|)\nonumber\\
&&\hspace{-0.15in}\leq\frac{1}{1-\epsilon}\frac{1}{n}\sum_{t=1}^nI(T_t^a,T_t^b;Y_t|S_{[t]})+\chi(\epsilon)+(n-1)\chi(\epsilon)\nonumber\\
&&\hspace{-0.15in}\overset{(a)}{\leq}\frac{1}{1-\epsilon}\frac{1}{n}\sum_{t=1}^nI(T_t^a,T_t^b;Y_t|S_{[t]})+\eta(\epsilon)\nonumber\\
&&\hspace{0.7in}-\frac{\epsilon}{1-\epsilon}\frac{1}{n}\sum_{t=1}^nI(T_t^a,T_t^b;Y_t|S_{[t]})\nonumber\\
&&\hspace{-0.15in}=\frac{1}{n}\sum_{t=1}^nI(T_t^a,T_t^b;Y_t|S_{[t]})+\eta(\epsilon)\label{eq-c5}
\end{eqnarray}
where $(a)$ is valid since $I(T_t^a,T_t^b;Y_t|S_{[t]})\leq \log|{\cal Y}|$. Furthermore,
\begin{eqnarray}
&&\hspace{-0.5in}I(T_t^a,T_t^b;Y_t|S_{[t]})\nonumber\\
&&=n\sum_{\mathbf{\sigma} \in {\cal S}^{(t-1)}}\alpha_{\mathbf{\sigma}}I(T_t^a,T_t^b;Y_t|S_t, S_{[t-1]}=\mathbf{\sigma}),
\end{eqnarray}
and substituting the above into (\ref{eq-c5}) yields (\ref{eq-lf1}).
\end{proof}
Observe now that, for any $t\geq1$, $I(T_t^a,T_t^b;Y_t|S_t,S_{[t-1]}=\mathbf{\sigma})$ is a function of the joint conditional distribution of channel state $S_t$, inputs $T_t^a, T_t^b$ and output $Y_t$ given the past realization $(S_{[t-1]}=\mathbf{\sigma})$. Hence, to complete the proof of the outer bound, we need to show that $P_{T_t^a,T_t^b,Y_t,S_t|S_{[t-1]}}(t_t^a,t_t^b,y_t,s_t|\mathbf{\sigma})$ factorizes as in (\ref{eq-joidist-f}). This is done in the lemma below. In particular, it is crucial to observe that the complete state observation at the decoder is enough to provide a product form on $T^a$ and $T^b$.
Before stating the lemma, let us introduce some more notations. Let $\mathbf{\sigma_a}$ and $\mathbf{\sigma_b}$ denote particular realizations of $S_{[t-1]}^a$ and $S_{[t-1]}^b$, respectively. Let
\begin{eqnarray}
&&\Upsilon_{\mathbf{\sigma_a}}^a(t^a):=\{w_a:\phi_{t}^{(a)}(w_a,s_{[t-1]}^a=\mathbf{\sigma_a})=t^a\}, \nonumber\\
&&\Upsilon_{\mathbf{\sigma_b}}^b(t^b):=\{w_b:\phi_{t}^{(b)}(w_b,s_{[t-1]}^b=\mathbf{\sigma_b})=t^b\}\label{eq-set1}
\end{eqnarray}
and
\begin{eqnarray}
\pi_{T^a}^{\mathbf{\sigma_a}}(t^a)&:=&\sum_{w_a\in \Upsilon_{\mathbf{\sigma_a}}^a(t^a)}\frac{1}{|{\cal W}_a|}, \nonumber\\
\pi_{T^b}^{\mathbf{\sigma_b}}(t^b)&:=&\sum_{w_b\in \Upsilon_{\mathbf{\sigma_b}}^b(t^b)}\frac{1}{|{\cal W}_b|}, \nonumber\\
\pi_{T^a}^{\mathbf{\sigma}}(t^a)&:=&\sum_{\mathbf{\sigma_a}}\pi_{T^a}^{\mathbf{\sigma_a}}(t^a)P_{S_{[t-1]}^a|S_{[t-1]}}(\mathbf{\sigma_a}|\mathbf{\sigma}),\nonumber\\ \pi_{T^b}^{\mathbf{\sigma}}(t^b)&:=&\sum_{\mathbf{\sigma_b}}\pi_{T^b}^{\mathbf{\sigma_b}}(t^b)P_{S_{[t-1]}^b|S_{[t-1]}}(\mathbf{\sigma_b}|\mathbf{\sigma}).\label{eq-pis}
\end{eqnarray}
\begin{lemma}\label{lem:fact-f}
For every $1\leq t\leq n$ and $\mathbf{\sigma} \in ({\cal S})^{t-1}$, the following holds
\begin{eqnarray}
&&\hspace{-0.45in}P_{T_t^a,T_t^b,Y_t,S_t|S_{[t-1]}}(t^a,t^b,y,s|\mathbf{\sigma})\nonumber\\
&&=P_S(s)P_{Y|S,T^a,T^b}(y|s,t^a,t^b)\pi_{T^a}^{\mathbf{\sigma}}(t^a)\pi_{T^b}^{\mathbf{\sigma}}(t^b).\label{eq-fact1-f}
\end{eqnarray}
\end{lemma}
\begin{proof}
Let $\textbf{s}:=(s,s_t^a,s_t^b)$ and observe that
\begin{eqnarray}
&&\hspace{-0.5in}P_{T_t^a,T_t^b,Y_t,S_t|S_{[t-1]}}(t^a,t^b,y,s|\mathbf{\sigma})\nonumber\\
&&=\sum_{s_t^a\in{\cal S}^a}\sum_{s_t^b\in{\cal S}^b}
P_{\textbf{S},T^a,T^b,Y|S_{[t-1]}}(\textbf{s},t^a,t^b,y|\mathbf{\sigma})\nonumber\\
&&=\sum_{s_t^a\in{\cal S}^a}\sum_{s_t^b\in{\cal S}^b}
P_{Y|\textbf{S},T^a,T^b}(y|\textbf{s},t^a,t^b)\nonumber\\
&&\hspace{0.8in} \times P_{\textbf{S},T^a,T^b|S_{[t-1]}}(\textbf{s},t^a,t^b|\mathbf{\sigma})\label{eq-fact2-f}
\end{eqnarray}
where the second equality verified by (\ref{eq-ch}) since $x_t^i=t_t^i(s_t^i)$ for $i=\{a,b\}$. Let us now consider the term $P_{\textbf{S},T^a,T^b|S_{[t-1]}}(\textbf{s},t^a,t^b|\mathbf{\sigma})$ above. We have the following
\begin{eqnarray}
&&\hspace{-0.5in}P_{\textbf{S},T^a,T^b|S_{[t-1]}}(\textbf{s},t^a,t^b|\mathbf{\sigma})\nonumber\\
&&\hspace{-0.4in}=\sum_{w_a\in {\cal W}_a}\sum_{w_b\in {\cal W}_b}\sum_{\mathbf{\sigma_a}}\sum_{\mathbf{\sigma_b}}\nonumber\\
&&\hspace{0.1in}P_{\mathbf{W},S_{[t-1]}^a,S_{[t-1]}^b,\textbf{S},T^a,T^b|S_{[t-1]}}(\mathbf{w},\mathbf{\sigma_a},\mathbf{\sigma_b},\textbf{s},t^a,t^b|\mathbf{\sigma})\nonumber\\
&&\hspace{-0.4in}\overset{(i)}{=}P_{\textbf{S}}(\textbf{s})\sum_{w_a\in {\cal W}_a}\sum_{w_b\in {\cal W}_b}\sum_{\mathbf{\sigma_a}}\sum_{\mathbf{\sigma_b}}\nonumber\\
&&\hspace{0.1in}P_{\mathbf{W},S_{[t-1]}^a,S_{[t-1]}^b,T^a,T^b|S_{[t-1]}}(\mathbf{w},\mathbf{\sigma_a},\mathbf{\sigma_b},t^a,t^b|\mathbf{\sigma})\nonumber\\
&&\hspace{-0.4in}\overset{(ii)}{=}P_{\textbf{S}}(\textbf{s})\sum_{w_a\in {\cal W}_a}\sum_{w_b\in {\cal W}_b}\sum_{\mathbf{\sigma_a}}\sum_{\mathbf{\sigma_b}}1_{\{t^l=\Phi_{t}^{(l)}(w_l,\mathbf{\sigma_l}),\enspace l=a,b\}}\nonumber\\
&&\hspace{0.1in}\times P_{\mathbf{W},S_{[t-1]}^a,S_{[t-1]}^b|S_{[t-1]}}(\mathbf{w},\mathbf{\sigma_a},\mathbf{\sigma_b}|\mathbf{\sigma})\nonumber\\
&&\hspace{-0.4in}\overset{(iii)}{=}P_{\textbf{S}}(\textbf{s})\sum_{w_a\in {\cal W}_a}\sum_{w_b\in {\cal W}_b}\sum_{\mathbf{\sigma_a}}\sum_{\mathbf{\sigma_b}}1_{\{t^l=\Phi_{t}^{(l)}(w_l,\mathbf{\sigma_l}),\enspace l=a,b\}}\nonumber\\
&&\hspace{0.1in}\times \frac{1}{|{\cal W}_a|}\frac{1}{|{\cal W}_b|}P_{S_{[t-1]}^a,S_{[t-1]}^b|S_{[t-1]}}(\mathbf{\sigma_a},\mathbf{\sigma_b}|\mathbf{\sigma})\nonumber\\
&&\hspace{-0.4in}\overset{(iv)}{=}P_{\textbf{S}}(\textbf{s})\sum_{\mathbf{\sigma_a}}P_{S_{[t-1]}^a|S_{[t-1]}}(\mathbf{\sigma_a}|\mathbf{\sigma})\sum_{\mathbf{\sigma_b}}P_{S_{[t-1]}^b|S_{[t-1]}}(\mathbf{\sigma_b}|\mathbf{\sigma})\nonumber\\
&&\hspace{-0.4in}\sum_{w_a\in {\cal W}_a} \frac{1}{|{\cal W}_a|}1_{\{t^a=\Phi_{t}^{(a)}(w_a,\mathbf{\sigma_a})\}}\sum_{w_b\in {\cal W}_b} \frac{1}{|{\cal W}_b|}1_{\{t^b=\Phi_{t}^{(b)}(w_b,\mathbf{\sigma_b})\}}\nonumber\\
&&\hspace{-0.4in}\overset{(v)}{=} P_{\textbf{S}}(\textbf{s})\sum_{\mathbf{\sigma_a}}P_{S_{[t-1]}^a|S_{[t-1]}}(\mathbf{\sigma_a}|\mathbf{\sigma}) \sum_{w_a\in \Upsilon_{\mathbf{\sigma_a}}^a(t^a)}\frac{1}{|{\cal W}_a|} \nonumber\\
&&\hspace{0.13in} \sum_{\mathbf{\sigma_b}}P_{S_{[t-1]}^b|S_{[t-1]}}(\mathbf{\sigma_b}|\mathbf{\sigma})\sum_{w_b\in \Upsilon_{\mathbf{\sigma_b}}^b(t^b)}\frac{1}{|{\cal W}_b|}\nonumber\\
&&\hspace{-0.4in}\overset{(vi)}{=} P_{\textbf{S}}(\textbf{s})\sum_{\mathbf{\sigma_a}}P_{S_{[t-1]}^a|S_{[t-1]}}(\mathbf{\sigma_a}|\mathbf{\sigma})\pi_{T^a}^{\mathbf{\sigma_a}}(t^a)\nonumber\\ &&\hspace{0.15in}\sum_{\mathbf{\sigma_b}}P_{S_{[t-1]}^b|S_{[t-1]}}(\mathbf{\sigma_b}|\mathbf{\sigma})\pi_{T^b}^{\mathbf{\sigma_b}}(t^b)\nonumber\\
&&\hspace{-0.4in}\overset{(vii)}{=}P_{\textbf{S}}(\textbf{s})\pi_{T^a}^{\mathbf{\sigma}}(t^a)\pi_{T^b}^{\mathbf{\sigma}}(t^b)\label{eq-pro1-f}
\end{eqnarray}
where $(i)$ is valid since the current state is independent of $\mathbf{W}$ and $(T^a,T^b)$, $(ii)$ is valid by (\ref{eq-t2}), $(iii)$ is valid since $\mathbf{W}$ is independent from the state processes, $(iv)$ is valid by (\ref{eq-sta-no}) and (\ref{eq-t2}), $(v)$ is valid due to (\ref{eq-set1}) and $(vi)-(vii)$ is valid due to (\ref{eq-pis}). Substituting (\ref{eq-pro1-f}) into (\ref{eq-fact2-f}) proves the lemma.
\end{proof}

We can now complete the proof of Theorem \ref{the-main-outer-f}. With Lemma \ref{lem:conv-f} it is shown that the sum of any achievable rate pair can be approximated by the convex combinations of rate conditions given in (\ref{eq-ra3-f}) which are indexed by $\mathbf{\sigma} \in {\cal S}^{(n)}$ and satisfy (\ref{eq-joidist-f}) for joint state-input-output distributions.
More explicitly, we have
\begin{eqnarray}
R_a+R_b&\leq& \sum_{\mathbf{\sigma} \in {\cal S}^{(n)}}\alpha_{\mathbf{\sigma}} I(T_t^a,T_t^b;Y_t|S_t,S_{[t-1]}=\mathbf{\sigma})+\eta(\epsilon)\nonumber\\
&=&\sum_{\mathbf{\sigma} \in {\cal S}^{(n)}}\alpha_{\mathbf{\sigma}} I(T_t^a,T_t^b;Y_t|S_t)_{\pi_{T^a}^{\mathbf{\sigma}}(t^a)\pi_{T^b}^{\mathbf{\sigma}}(t^b)}+\eta(\epsilon)\nonumber\\
&\leq&\sup_{\left(\pi_{T^a}^{\mathbf{\sigma}}(t^a)\pi_{T^b}^{\mathbf{\sigma}}(t^b),\enspace\mathbf{\sigma}\right)}I(T_t^a,T_t^b;Y_t|S_t)+\eta(\epsilon)\nonumber\\
&\leq&\sup_{\left(\pi_{T^a}(t^a)\pi_{T^b}(t^b)\in \Pi\right)}I(T_t^a,T_t^b;Y_t|S_t)+\eta(\epsilon)\nonumber
\end{eqnarray}
where the second step is valid since $I(T_t^a,T_t^b;Y_t|S_t,S_{[t-1]}=\mathbf{\sigma})$ is a function of the joint conditional distribution of channel state $S_t$, inputs $T_t^a, T_t^b$ and output $Y_t$ given the past realization $(S_{[t-1]}=\mathbf{\sigma})$. Hence, since $\lim_{n \rightarrow \infty}\eta(\epsilon)=0$, any achievable pair satisfies $R_a+R_b \leq \sup_{\pi_{T^a}(t^a)\pi_{T^b}(t^b)}I(T^a,T^b;Y|S)$.
\end{proof}
As a direct consequence of Theorem \ref{the-main-outer-f}, we have the following corollary.
\begin{corollary}\label{cor2}
\begin{eqnarray}
{\cal C}^{FS}_{\sum}=\sup_{\pi_{T^a}(t^a)\pi_{T^b}(t^b)}I(T^a,T^b;Y|S)\nonumber.
\end{eqnarray}
\end{corollary}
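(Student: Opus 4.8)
The plan is to bracket ${\cal C}^{FS}_{\sum}$ between the sum-rate permitted by the outer bound and the sum-rate guaranteed by the inner bound, and to observe that these coincide. The ``$\leq$'' direction is immediate from Theorem \ref{the-main-outer-f}: since ${\cal C}_{FS}\subseteq{\cal C}_{OUT}$, every achievable pair $(R_a,R_b)$ obeys $R_a+R_b\leq\sup_{\pi_{T^a}(t^a)\pi_{T^b}(t^b)}I(T^a,T^b;Y|S)$, and taking the maximum over ${\cal C}_{FS}$ gives ${\cal C}^{FS}_{\sum}\leq\sup_{\pi_{T^a}(t^a)\pi_{T^b}(t^b)}I(T^a,T^b;Y|S)$.

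For the reverse inequality I would invoke the inner bound (Theorem \ref{the-main-achi-f}) and exploit the pentagon structure of the regions $\mathcal{R}(\pi)$. Fix a memoryless stationary team policy $\pi=(\pi_{T^a},\pi_{T^b})$. Under the factorization (\ref{eq-joidist-f}) the triple $(S,T^a,T^b)$ is mutually independent, so $I(T^a;T^b|S)=0$ and hence, by the chain rule, $I(T^a;Y|T^b,S)=I(T^a;T^b,Y|S)=I(T^a;Y|S)+I(T^a;T^b|Y,S)\geq I(T^a;Y|S)$; symmetrically $I(T^b;Y|T^a,S)\geq I(T^b;Y|S)$. Consequently the corner point $\big(I(T^a;Y|S),\,I(T^b;Y|T^a,S)\big)$ satisfies all three defining inequalities (\ref{eq-ra1-f})--(\ref{eq-ra3-f}) of $\mathcal{R}(\pi)$ with at most equality (in particular the sum constraint (\ref{eq-ra3-f}) holds with equality, by the chain rule $I(T^a;Y|S)+I(T^b;Y|T^a,S)=I(T^a,T^b;Y|S)$), so this point lies in the closure $\overline{\mathcal{R}(\pi)}\subseteq{\cal C}_{IN}\subseteq{\cal C}_{FS}$, and its rate-sum equals $I(T^a,T^b;Y|S)$. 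Hence ${\cal C}^{FS}_{\sum}\geq I(T^a,T^b;Y|S)$ for every $\pi$, and taking the supremum over all product policies yields ${\cal C}^{FS}_{\sum}\geq\sup_{\pi_{T^a}(t^a)\pi_{T^b}(t^b)}I(T^a,T^b;Y|S)$. Combining the two inequalities proves the corollary; the supremum is attained because the set $\Pi$ of product policies is compact and $I(T^a,T^b;Y|S)$ is continuous in $\pi$ for the fixed channel law.

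The only non-routine step is this corner-point argument: one must check that at the chosen vertex it is the sum-rate bound (\ref{eq-ra3-f}), and not the single-user bounds (\ref{eq-ra1-f})--(\ref{eq-ra2-f}), that is active, which is exactly where the mutual independence of $S$, $T^a$ and $T^b$ forced by (\ref{eq-joidist-f}) enters. A minor technicality is that $\mathcal{R}(\pi)$ is written with strict inequalities, so the vertex sits on the boundary rather than in the interior of $\mathcal{R}(\pi)$; this is immaterial since we only need membership in the closed set ${\cal C}_{IN}$, and the few policies for which $\mathcal{R}(\pi)$ degenerates are handled by a standard limiting argument together with the fact that $(0,0)$ is always achievable. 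Everything else is routine bookkeeping with the chain rule and the definitions of ${\cal C}_{IN}$, ${\cal C}_{OUT}$ and ${\cal C}^{FS}_{\sum}$.
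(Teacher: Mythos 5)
Your proof is correct and fills in precisely the argument the paper leaves implicit: the paper states Corollary~\ref{cor2} merely as ``a direct consequence of Theorem~\ref{the-main-outer-f},'' but as you rightly observe the converse direction (Theorem~\ref{the-main-outer-f}) only gives ${\cal C}^{FS}_{\sum}\leq\sup_{\pi}I(T^a,T^b;Y|S)$, and the achievability of that supremum is what Theorem~\ref{the-main-achi-f} supplies. Your corner-point argument --- that $\bigl(I(T^a;Y|S),\ I(T^b;Y|T^a,S)\bigr)$ lies in $\overline{\mathcal{R}(\pi)}\subseteq{\cal C}_{IN}$ because the mutual independence of $S,T^a,T^b$ forced by (\ref{eq-joidist-f}) makes the sum constraint (\ref{eq-ra3-f}) the binding one --- is the standard and correct way to extract the sum-rate from the pentagon $\mathcal{R}(\pi)$, and your handling of the strict-inequality technicality via closure is fine. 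This is in substance the same approach the authors intend, just written out.
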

\begin{remark}
One main observation about the proof of Theorem \ref{the-main-outer-f} is the fact that, once we have the complete state information, conditioning on which allows a product form on $T^a$ and $T^b$, there is no loss of optimality (for the sum-rate capacity) in using associated memoryless team policies instead of using all the past information at the receiver. This fact is observed in \cite{GiacomoSerdar} when the information at the encoders are asymmetric quantized version of the information at the decoder.
\end{remark}
\begin{remark}
It should be noted that the main difference between the problem that we consider here and the one considered in \cite{GiacomoSerdar} is the information at the decoder about the information at the encoders. More explicitly, in \cite{GiacomoSerdar}, the information at the encoders are available at the decoder and as such, as the authors explicitly mention in their paper, the decoder does not need to estimate the coding policies used in a decentralized time-sharing. From this perspective, the main contribution of our work can be thought as showing that when this is not the case, by enlarging the input space, there is no loss of optimality (for the sum-rate capacity) if the optimization is performed by ignoring the past information at the encoders given that the decoder has complete CSI.
\end{remark}
\section{Conclusion and Remarks}\label{conc}
The present paper has investigated the memoryless FS-MAC with asymmetric noisy CSI at the encoders and complete CSI at the decoder. Single letter inner and outer bounds  are presented when the channel state is a sequence of i.i.d. random variables. The main contribution of the paper, i.e., the tight converse for the sum-rate capacity and hence an outer bound to the capacity region, is realized by observing that the information available at the decoder is enough to attain a product form on the channel input functions and hence there is no loss of optimality if we ignore the past noisy state information at the encoders.

\end{document}